\spnewtheorem*{question*}{Question}{\bfseries}{}
\spnewtheorem{codes}{Code}{\bfseries}{}
\newcommand{\infw}[1]{\mathbf{#1}}
\DeclareMathOperator{\rep}{rep}
\DeclareMathOperator{\val}{val}
\newcommand*{\rom}[1]{\expandafter\@slowromancap\romannumeral #1@}
\begin{document}
\title{Exploring the Crochemore and Ziv-Lempel factorizations of some automatic sequences with the software \texttt{Walnut}}
\titlerunning{The c-and z-factorizations of some automatic sequences via \texttt{Walnut}}
% If the paper title is too long for the running head, you can set
% an abbreviated paper title here
%
\author{Marieh Jahannia \inst{1}\orcidID{0000-0001-8510-2599} 
%\and Morteza Mohammad-Noori \inst{1} 
\and Manon Stipulanti\inst{2}\orcidID{0000-0002-2805-2465}}
\authorrunning{M. Jahannia and M. Stipulanti}
% First names are abbreviated in the running head.
% If there are more than two authors, '.' is used.
%
\institute{
School of Mathematics, Statistics and Computer Science, College of Science University of Tehran, Tehran, Iran \\
\email{mjahannia@ut.ac.ir}
\and 
 Department of Mathematics,	University of Li\`ege, Li\`ege, Belgium, \\\email{m.stipulanti@uliege.be}
 }
\maketitle              % typeset the header of the contribution

\begin{abstract}
We explore the Ziv-Lempel and Crochemore factorizations of some classical automatic sequences making an extensive use of the theorem prover \verb|Walnut|.
\end{abstract}

\keywords{Combinatorics on words \and Crochemore factorization \and Ziv-Lempel factorization  \and Automatic sequences
%\and  Thue-Morse sequence \and Period-doubling sequence \and  Rudin-Shapiro sequence \and  Paper-folding sequence
\and \texttt{Walnut} theorem prover.}

\bigskip

\noindent\textbf{2020 Mathematics Subject Classification:} 11B85, 68R15

\newpage

\section{Introduction}

In the expansive variety of tools at the heart of combinatorics on words, factorizations break down a given sequence into simpler components to provide valuable insights about its properties and behavior.
Crochemore~\cite{croch2,croch3} on the one hand and Lempel and Ziv~\cite{LZfact,Ziv} on the other introduced two such distinguished factorizations, taking after their respective names.
The first was aimed for algorithm design and builds on repetitive and non-repetitive aspects of sequences.
The second has remained a cornerstone of data compression and string processing algorithms, with ongoing discoveries revealing new applications for its use.
In the case of infinite words, Berstel and Savelli~\cite{Berstel06} characterized the Crochemore factorization of Sturmian words, the Thue-Morse sequence and its generalizations, and the period-doubling sequence.
Then Ghareghani et al.~\cite{ghareghani2020z} examined both the Crochemore and Ziv-Lempel factorizations for standard episturmian words.
Constantinescu and Ilie~\cite{CI2007} characterized the general behavior of the Ziv-Lempel factorization of morphic sequences, depending on the periodicity of the sequence and the growth function of the morphism.
Jahannia et al.~\cite{jahannia2018palindromic,jahannia2020closed} introduced two variations of these factorizations where the factors are required to satisfy an additional property and studied them for the $m$-bonacci words. 

Within combinatorics on words, the study of automatic sequences~\cite{Allouche,RM2002,Sha88} has turned out to be a fascinating journey, showing complex structures and patterns defined by fundamental mathematical principles. 
As their name points it out, these sequences are produced by finite automata with output, but also as fixed points of morphisms.
Their generation rules often lead to connections to various branches of mathematics.
The Thue-Morse sequence is one of the most famous --if not the most famous-- examples and encapsulates the binary encoding of the occurrences of $0$'s and $1$'s in binary representations, revealing a self-replicating and non-repetitive structure~\cite{Allouche}.
The period-doubling sequence, born from the logistic map, exhibits a chaotic behavior and contributes to the field of dynamical systems~\cite{Devaney}.
The Rudin-Shapiro sequence, recognized for its statistical relationship or mutual dependence to the Golay sequence, unfolds a binary pattern influenced by the existence of certain arithmetic progressions~\cite{brillhart1991case}.
Lastly, the paper-folding sequence, a classic representation of a fractal, illustrates how simple operations can generate infinitely long sequences~\cite{Allouche}.

Recent developments in the field of sequence analysis have seen the emergence of systematic and automated decision procedures designed to autonomously determine the validity of a given property for specified sequences, altering the need of human work in proofs.
Notably, in the case of automatic sequences, Mousavi~\cite{mousavi2021automatic} and Shallit~\cite{Shallit2022logical} have made significant contributions by developing the software called \verb|Walnut|.
The latter works by representing a sequence as a finite automaton and expressing properties as first-order logic predicates. The decision procedure then translates these predicates into automata, facilitating the identification of representations for which the predicate holds true.

In this paper, we make use of \verb|Walnut| to obtain a precise description of the Crochemore and Ziv-Lempel factorizations of some classical automatic sequences.
More precisely, our main direction of investigation is the following general question (see Section~\ref{sec:background} for precise definitions):
\begin{question*}
Given an abstract numeration system $S$ and an $S$-automatic sequence $\infw{x}$, is it possible to use \verb|Walnut| to show that the starting positions and lengths of the factors in both the Crochemore and Ziv-Lempel factorizations of $\infw{x}$ only depend on the numeration system $S$?
\end{question*}

Using Fici's~\cite{fici2015factorizations} nice survey of factorizations of the Fibonacci word, we produce a detailed \verb|Walnut| code in Section~\ref{sec:Fibo} to answer this question about the Fibonacci word.
In Section~\ref{sec: z and c for classical aut seq}, we apply this code to other classical automatic sequences.
We end the paper with Section~\ref{sec:conclusion} where we discuss the scope of our method.

%------------------------------------------------------------------------------------%

\section{Background}
\label{sec:background}

\textbf{Combinatorics on words.}
We let $\Sigma$ denote a finite set of symbols, called the \emph{letters}, referred to as an \emph{alphabet}.
A \emph{word} over $\Sigma$ is a finite or infinite sequence of letters chosen from $\Sigma$.
In this paper, to differentiate finite and infinite words, we write the latter in bold.
As usual, we let $\Sigma^{*}$ represent the set of finite words over $\Sigma$, and $\varepsilon$ denote the \emph{empty} word.
For a word $w \in \Sigma^*$, we let $|w|$ denote its length. For all $i\in\{0,\ldots,|w|-1\}$, we use $w[i]$ to refer to the $i$-th letter of $w$, starting from position $0$. 
If we write $w = w_0w_1\cdots w_{|w|-1}$, then we let $\widetilde{w}$ denote its \emph{reversal} or \emph{mirror}, defined as  $\widetilde{w}=w_{|w|-1}\cdots w_1 w_0$.
A \emph{factor} of $w$ is a contiguous block of letters within $w$; we write $w[i..j]$ to represent the factor occupying positions $i, i+1, \ldots, j$.
For instance, if $w = \texttt{computer}$, then $v = \texttt{comp}$ is a factor of $w$ where $v = w[0..3]$.
A \emph{prefix} (resp., \emph{suffix}) of $w$ is a word $x$  such that $w = xy$ (resp., $w = yx$) for some word $y$. For instance, \texttt{comp} is a prefix and \texttt{er} is a suffix of \texttt{computer}.
If $w=xy$, we write $x^{-1}w= y$ and $wy^{-1} = x$.

A \emph{morphism} is a mapping $\psi \colon \Sigma^{*} \rightarrow \Sigma^{*}$ such that for all $u, v \in \Sigma^{*}$, $\psi(uv) = \psi(u)\psi(v)$.
A morphism $\psi$ is \emph{$k$-uniform} if there exists an integer $k$ such that $|\psi(a)|=k$ for all $a \in \Sigma$.
A \emph{coding} is a $1$-uniform morphism.
The morphism $\psi$ is \emph{prolongable} on the letter $a\in\Sigma$ if $\psi(a) = au$ and $\psi^{n}(u) \neq \varepsilon$ for all $n \geq 0$.
A \emph{fixed point} of $\psi$ is given by $\psi^{\omega}(a) = au \psi(u) \psi^{2}(u) \cdots $.
For example, the morphism $\mu \colon a \mapsto ab, 1 \mapsto ba$ is prolongable on both letters $a$ and $b$. The infinite word $t =\mu^{\omega}(a)= abba baab baab abba baab abba \cdots$ is a fixed point called the \emph{Thue-Morse} sequence.

\textbf{Factorizations.}
For a finite word $w$, a \emph{factorization} of $w$ is a sequence $(x_0, x_1, \ldots,x_m)$ of finite words such that $w$ can be expressed as the concatenation of the elements of the sequence, i.e., $w=x_0x_1 \cdots x_m $. Similarly, in the case of an infinite word $\infw{w}$, the factorization is a sequence $(x_0,x_1,\ldots)$ of finite words such that $\infw{w}= x_0x_1\cdots$.
For example, a factorization of the word $w = \texttt{abracadabra}$ is $(\texttt{ab},\texttt{ra}, \texttt{ca}, \texttt{da}, \texttt{bra})$.
We now introduce two distinguished factorizations of words.
More formally, given an infinite word $\infw{w}$, its \emph{Ziv-Lempel factorization}, in short \emph{$z$-factorization}, is given by $z(\infw{w}) = (z_0, z_1, \ldots)$, where $z_m$ is the shortest prefix of $z_mz_{m+1}\cdots$ occurring only once in the word $z_0 \cdots z_m$.
The \emph{Crochemore factorization}, in short \emph{$c$-factorization}, of $\infw{w}$ is given by $c(\infw{w}) = (c_0, c_1, \ldots)$, where $c_m$ is either the longest prefix of $c_mc_{m+1} \cdots$ occurring twice in $c_0 \cdots c_m$ or a letter not present in $c_0 \cdots c_{m-1}$.
Roughly, the Ziv-Lempel factorization breaks the sequence into minimal never seen before factors, while the Crochemore one splits the sequence into maximal already seen factors.
We consider similar definitions on finite words.
For instance, for $w = abbabaabbaababb$, its $z$- and $c$-factorizations are respectively $z(w) = (a, b, ba, baa, bbaa, babb)$ and $c(w) = (a,b,b,ab,a,abba,aba,bb)$.

\textbf{Abstract numeration systems.}
Such numeration systems were introduced at the beginning of the century by Lecomte and Rigo~\cite{Lecomte-Rigo-2001}; see also~\cite[Chap.~3]{CANT10} for a general presentation.
An abstract numeration system (ANS) is defined by a triple $S=(L,\Sigma,<)$ where $\Sigma$ is an alphabet ordered by the total order $<$ and $L$ is an infinite regular language over $\Sigma$, i.e., accepted by a deterministic finite automaton. 
We say that $L$ is the \emph{numeration language} of $S$.
When we genealogically order the words of $L$, we obtain a one-to-one correspondence $\rep_S$ between $\mathbb{N}$ and $L$.
Then, the \emph{$S$-representation} of the non-negative integer $n$ is the $(n+1)$st word of $L$, and the inverse map, called the \emph{(e)valuation map}, is denoted by $\val_S$.
For instance, consider the ANS $S$ built on the language $a^* b^*$ over the ordered alphabet $\{a,b : a<b\}$. 
  The first few words in the language are $\varepsilon,a,b,aa,ab,bb,aaa$, and we have $\rep_S(5)=bb$ and $\val_S(aaa)=6$.

\textbf{Automatic sequences.}
As their name indicates it, automatic sequences are defined through automata.
A \emph{deterministic finite automaton with output (DFAO)} is defined by a 6-tuple $\mathcal{M}=(Q, \Sigma, \delta, q_0,\Delta, \tau)$, where $Q$ is a finite set of \emph{states}, $\Sigma$ is a finite \emph{input alphabet}, $\delta\colon Q \times \Sigma \rightarrow Q$ is the \emph{transition function}, $q_0$ is the \emph{initial state}, $\Delta$  is a finite \emph{output alphabet}, and $\tau\colon Q \rightarrow \Delta$ is the \emph{output function}.
The output of $M$ on the finite word $w \in \Sigma^{*}$, denoted $M(w)$, is defined as $M(w) = \tau (\delta(q_0,w))\in \Delta$.

Let $S=(L,\Sigma,<)$ be an ANS.
An infinite word $\mathbf{x}$ is {\em $S$-automatic} if there exists a DFAO $\mathcal{M}$ such that, for all $n\ge 0$, the $n$th term $\mathbf{x}[n]$ of $\infw{x}$ is given by the output $\mathcal{M}(\rep_S(n))$ of $\mathcal{M}$.
In this case, we say that the DFAO $\mathcal{M}$ \emph{generates} or \emph{produces} the sequence  $\infw{x}$.
In particular, for an integer $k\ge 2$, if $\mathcal{M}$ is fed with the genealogically ordered language $\{\varepsilon\} \cup \{1,\ldots,k-1\}\{0,\ldots,k-1\}^*$, then $\mathbf{x}$ is said to be {\em $k$-automatic}.
For the case of integer base numeration systems, a classical reference on automatic sequences is~\cite{Allouche}, while~\cite{RM2002,Sha88} treat the case of more exotic numeration systems.
A well-known characterization of automatic sequences states that they are morphic, i.e., obtained as the image under a coding of a fixed point of a morphism~\cite{RM2002}.
In particular, a sequence is $k$-automatic if and only if the morphism producing it is $k$-uniform morphism~\cite{Allouche}.

\begin{figure}
\centering
\begin{subfigure}{0.25\textwidth}
    \begin{tikzpicture}[>=stealth,node distance=2cm,on grid,auto,scale=0.75, every node/.style={scale=0.75}]
    % Define nodes
    \node[state, initial,initial text=] (q0) {$0/a$};
    \node[state, right=of q0] (q1) {$1/b$};
    % Add output labels
   % \node[right=0.5cm of q0] {$A$};
    %\node[right=0.5cm of q1] {$B$};
    % Add transitions
    \path[->]
        (q0) edge [loop above] node {0} ()
             edge [bend left] node {1} (q1)
        (q1) edge [bend left] node {0} (q0);
    \end{tikzpicture}
    \caption{Fibonacci.}
    \label{fig:FIB:DFAO}
\end{subfigure}
\hfill
\begin{subfigure}{0.25\textwidth}
    \begin{tikzpicture}[>=stealth,node distance=2cm,on grid,auto,scale=0.75, every node/.style={scale=0.75}]
    % Define nodes
    \node[state, initial, initial text=] (q0) {$q_0/a$};
    \node[state, right=of q0] (q1) {$q_1/b$};
    % Add output labels
   % \node[right=0.5cm of q0] {$A$};
    %\node[right=0.5cm of q1] {$B$};
    % Add transitions
    \path[->]
        (q0) edge [loop above] node {$0$} ()
             edge [bend left] node {$1$} (q1)
        (q1) edge [loop above] node {$0$} () 
             edge [bend left] node {$1$} (q0);
    \end{tikzpicture}
    \caption{Thue-Morse.}
    \label{fig:TM:DFAO}
\end{subfigure}
\hfill
\begin{subfigure}{0.25\textwidth}
     \begin{tikzpicture}[->,>=stealth',shorten >=1pt,auto,node distance=2cm,scale=0.75, every node/.style={scale=0.75}]
  \tikzstyle{every state}=[fill=white,text=black]
  
  \node[state,initial, initial text=] (0) {$q_0/a$};
  \node[state] (1) [right of=0] {$q_1/b$};
  
  \path (0) edge [loop above] node {$0$} (0)
        (0) edge [bend left] node {$1$} (1)
        (1) edge [bend left] node {$0,1$} (0);
  \end{tikzpicture}
  \caption{Period-doubling.}
  \label{fig:PD:DFAO}
\end{subfigure}
\\
\begin{subfigure}{0.25\textwidth}
    \begin{tikzpicture}[>=stealth,node distance=2cm,on grid,auto,scale=0.75, every node/.style={scale=0.75}]
		% Define nodes
		\node[state, initial, initial text=] (q0) {$q_0/a$};
		\node[state, right=of q0] (q1) {$q_1/b$};
		% Add output labels
		% \node[right=0.5cm of q0] {$A$};
		%\node[right=0.5cm of q1] {$B$};
		% Add transitions
		\path[->]
		(q0) edge [loop above] node {$0,1$} ()
		edge [bend left] node {$2$} (q1)
		(q1) edge [loop above] node {$0,1$} () 
		edge [bend left] node {$2$} (q0);
	\end{tikzpicture}
	\caption{Mephisto-Waltz.}
	\label{fig:M:DFAO}
\end{subfigure}
\hfill
\begin{subfigure}{0.7\textwidth}
    \begin{tikzpicture}[shorten >=0.25pt,node distance=2cm,on grid,auto,scale=0.75, every node/.style={scale=0.75}] 
   \node[state,initial, initial text=] (q0)   {$q_0/+1$}; 
   \node[state] (q1) [right=of q0] {$q_1/+1$}; 
   \node[state] (q2) [right=of q1] {$q_2/-1$}; 
   \node[state] (q3) [right=of q2] {$q_3/-1$};
    \path[->] 
    (q0) edge [bend left] node {1} (q1)
    (q1) edge [bend left] node {1} (q2)
    (q2) edge [bend left] node {1} (q1)
    (q2) edge [bend left] node {0} (q3)
    (q3) edge [loop above] node {0} ()
    (q3) edge [bend left] node {1} (q2)
    (q1) edge  [bend left] node {0} (q0);
    
    \draw (q0) edge [loop above] node {0} ();
    \end{tikzpicture}
    \caption{Rudin-Shapiro.}
    \label{fig:DFAO:RS}
\end{subfigure}
\\
\begin{subfigure}{0.9\textwidth}
\begin{tikzpicture}[shorten >=0.75pt,node distance=2cm,on grid,auto,scale=0.75, every node/.style={scale=0.75}] 
   \node[state,initial, initial text=] (q0)   {$q_0/+1$}; 
   \node[state] (q1) [right=of q0] {$q_1/+1$}; 
   \node[state] (q2) [right=of q1] {$q_2/-1$}; 
   \node[state] (q3) [right=of q2] {$q_3/-1$};
    \path[->] 
    (q0) edge [bend left] node {1} (q1)
    (q1) edge [bend left] node {0} (q2)
    (q2) edge [bend left] node {1} (q3)
    (q2) edge [bend left] node {0} (q0)
    (q3) edge [loop above] node {1} ()
    (q1) edge [loop above] node {1} ()
    (q3) edge [bend left] node {0} (q2)  ;  
    \draw (q0) edge [loop above] node {0} ();
    \end{tikzpicture}
    \caption{Paper-folding.}
    \label{fig:DFAO:PF}
    
\end{subfigure}
        
\caption{DFAOs generating the automatic sequences of the paper.}
\label{fig:figures}
\end{figure}
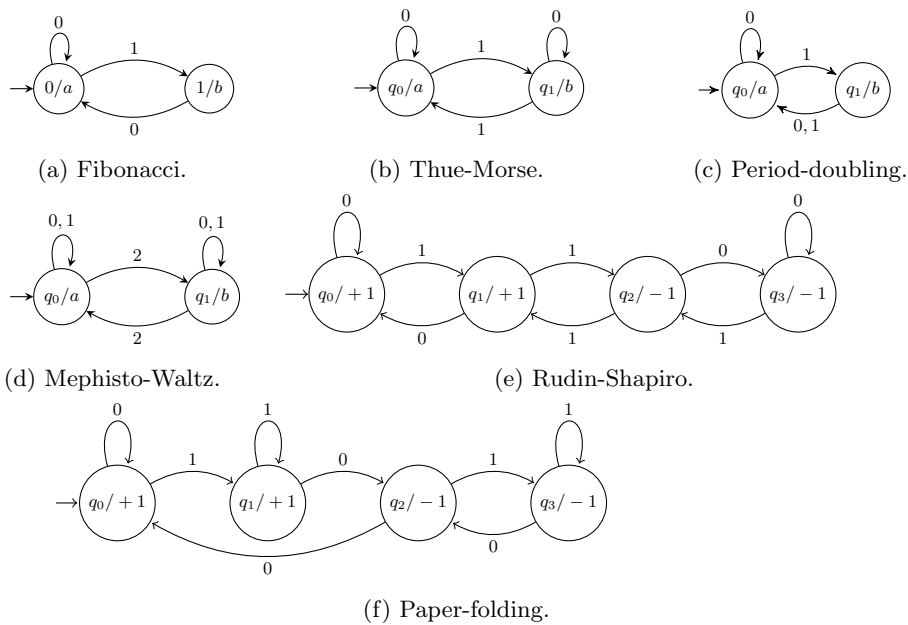

\section{The Fibonacci word}
\label{sec:Fibo}

The infinite Fibonacci word $\infw{f}= abaababa \ldots $  is the fixed point, starting with $a$, of the morphism $\phi \colon a \rightarrow ab, b \rightarrow a$.
It is automatic in a specific ANS based on Fibonacci numbers defined by $F_0 = 1$, $F_1 = 2$, and $F_n = F_{n-1} + F_{n-2}$ for all $n\ge 2$.
Zeckendorf~\cite{Zekendorf} demonstrated a remarkable theorem stating that every non-negative integer can be represented as a sum of distinct non-consecutive Fibonacci numbers.
%For instance, $50$ can be expressed as $50 = 34 + 13 + 3 = F_7 + F_5 + F_2$. 
Given an integer $n$, its \textit{canonical Fibonacci representation}, denoted as  $\rep_F(n)$, is defined as $\rep_F(n) = \sum_{i=0}^{m} c_i F_i$ where the coefficients $c_i$ are in $\{0,1\}$ and obtained using the greedy algorithm.
This gives rise to the \emph{Fibonacci} or \emph{Zeckendorf numeration system} and the corresponding \emph{Fibonacci-automatic} sequences.
For instance, the automaton in Figure~\ref{fig:FIB:DFAO} generates $\infw{f}$.

Related to our concern, Fici~\cite{fici2015factorizations} showed that the $z$-factorization of the Fibonacci word $\infw{f}$ is the concatenation of its singular words, i.e., 
\begin{align}
\label{Equ:FibZFactFici}
	\infw{f}=\prod_{n \geq -1} w_{n} 
                =a  \cdot b \cdot aa \cdot bab \cdot aabaa \cdot babaabab \cdots.
\end{align}
The \emph{singular words} of the Fibonacci word $\infw{f}$, introduced by Wen and Wen~\cite{WenWen}, are defined as follows: $w_{-1}=a$,  $w_0 = b$, and for $n \geq 1$,  $w_n = x \cdot \phi^n(a) \cdot y^{-1} $, where  $xy \in \{ab, ba\}$ is the length-$2$ suffix of $\phi^n(a)$. 
The first few singular words are $a$, $b$, $aa$,  $bab$, $aabaa$, $babaabab$, $aabaababaabaa$.
Notably,  $|w_n| = |\phi^n(a)| = F_n$  for all  $n \geq 0$.

The idea behind our approach is that we can express the $z$-factorization of $\infw{f}$ as a first-order logic formula in \verb|Walnut|.
We define the two following predicates:
\begin{verbatim}
def fibfactoreq "?msd_fib At t<n => F[i+t]=F[j+t]":
def fibzfactor "?msd_fib (Aj j<i => ~$fibfactoreq(i,j,n)) 
    & (At t<n => (El l<i => $fibfactoreq(i,l,t)))":        
\end{verbatim}
The first formula takes the triple $(i,j,n)$ as input and checks whether the length-$n$ factors $\infw{f}[i..i+n-1]$ and $\infw{f}[j..j+n-1]$ are equal.
The second formula, on input $(i,n)$, verifies whether the factor $\infw{f}[i..i+n-1]$ does not appear before position $i$ and each of its prefixes appears before.
In other words, it tests whether $\infw{f}[i..i+n-1]$ is the shortest prefix of $(\infw{f}[0...i-1])^{-1}\infw{f}$ occurring only once in $\infw{f}[0...i+n-1]$.
In particular, the variables $i,j$ indicate positions within $\infw{f}$ and $n$ is a measure of length.
Running \verb|Walnut| on these predicates yields the automaton in Figure~\ref{fig:Walnut:FibZFact}.
\begin{figure}
    \centering
    \includegraphics[scale=0.45]{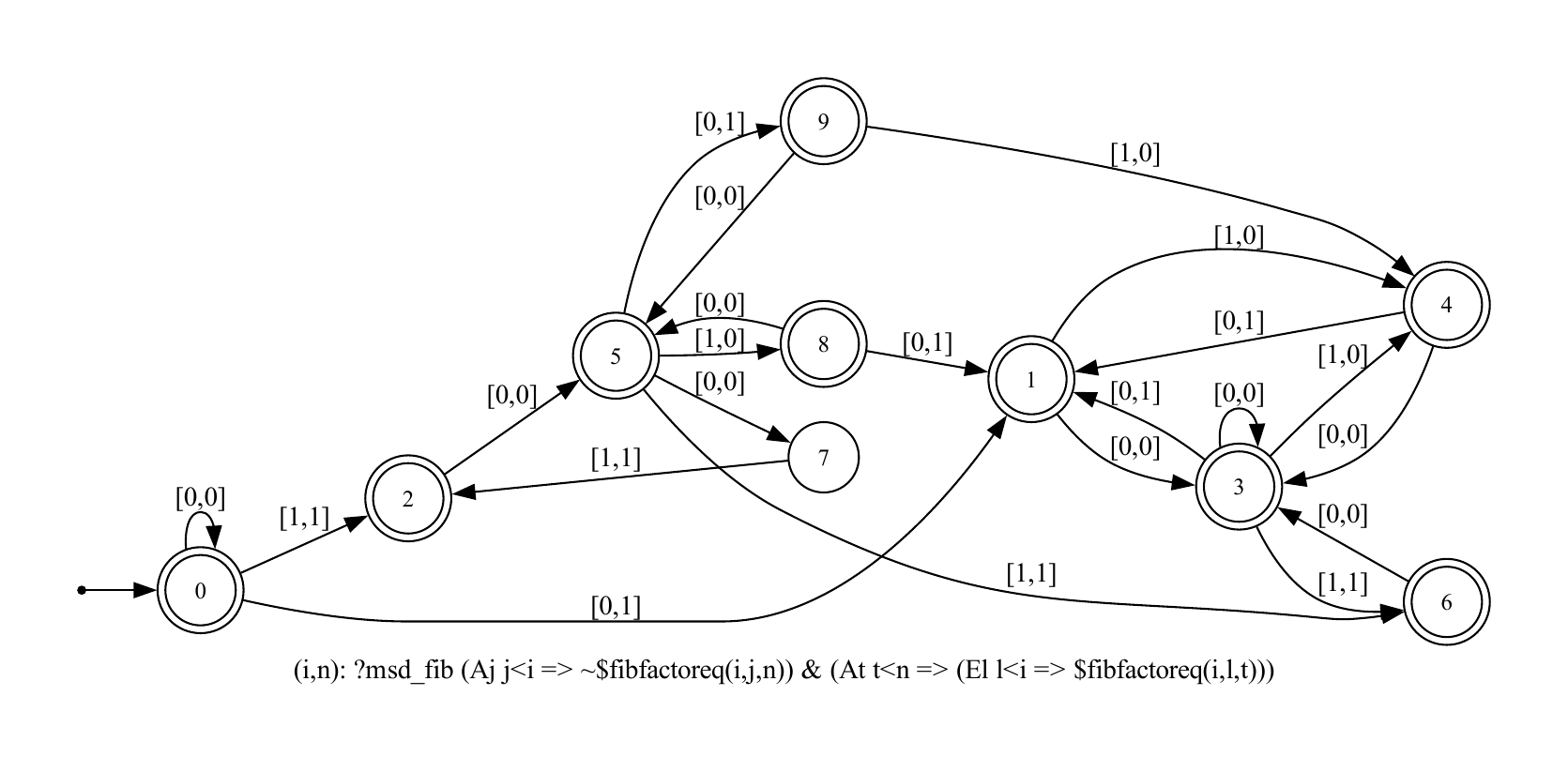}
    \caption{An automaton accepting, among others, the base-$2$ representations of the pairs $(i,n)$ giving the position and length of factors of the $z$-factorization of the Fibonacci word.}
    \label{fig:Walnut:FibZFact}
\end{figure}
Now observe from Identity~\eqref{Equ:FibZFactFici} that the pairs  $(i, n)$ of position and length of the factors of the $z$-factorization of $\infw{f}$ are given by $(0,1)$ and $(F_{n+1}-1, F_n)$ for all $n\ge 0$. 
This gives rise to the regular expression in \verb|Walnut|:
\begin{verbatim}
reg fibzgoodrep msd_fib msd_fib "[0,0]*[0,1] | [0,0]*[1,1] 
    | [0,0]*[1,1][0,0]([1,0][0,0])* 
    | [0,0]*[1,1][0,0]([1,0][0,0])*[1,0]":
\end{verbatim}
Now we check that the pairs $(i,n)$ guessed before indeed give factors having the desired property \verb|fibzfactor|, so we write the following check in \verb|Walnut|:
\begin{verbatim}
eval fibzcheck "?msd_fib Ai An $fibzgoodrep(i,n) => $fibzfactor(i,n) ":
\end{verbatim}
and \verb|Walnut| returns \verb|TRUE|.
Note that previous check is not a bi-implication.
Indeed, for instance, we see from Figure~\ref{fig:Walnut:FibZFact} that other pairs $(i,n)$ satisfy \verb|fibzfactor|.
This is for instance the case of $(3,4)$ since $\infw{f}[3..6]=abab$ is the shortest prefix of $(\infw{f}[0...2])^{-1}\infw{f}=ababaabaababaaba\cdots$ that occurs only once in $\infw{f}[0...6]=abaabab$.
Therefore, to obtain the $z$-factorization of $\infw{f}$, we need the final check \verb|fibcheck| to be true, but also to have consecutive positions $i$ that cover all $\mathbb{N}$.

Fici~\cite{fici2015factorizations} showed the following other factorization of the Fibonacci word:
\begin{align}
\label{Equ:FibCFactFici}
    \infw{f} = \prod_{n \geq 1} \widetilde{\phi^n(a)}
            =a \cdot ba \cdot aba \cdot baaba \cdot ababaaba \cdots.
\end{align}
In fact, the latter is almost the $c$-factorization of $\infw{f}$, with the difference that
the $c$-factorization starts with $a$, $b$, $a$ and then coincides with Identity~\eqref{Equ:FibCFactFici}.
See~\cite{Berstel06}.
It is not difficult to modify the formula \verb|fibzfactor| in \verb|Walnut| to deal with the $c$-factorization instead: 
\begin{verbatim}
def fibcfactor "?msd_fib (Ej j<i => $fibfactoreq(i,j,n)) 
        & (Al l<i => ~$fibfactoreq(i,l,n+1))":
\end{verbatim}
As before, examining Identity~\eqref{Equ:FibCFactFici}, the pairs $(i, n)$ for the factors of the $c$-factorization are given by the following regular expression in \verb|Walnut|:
\begin{verbatim}
reg fibcgoodrep msd_fib msd_fib "[0,0]*[0,1] | [0,0]*[1,1] 
    | [0,0]*[1,0][0,1] | [0,0]*[1,1][0,0][0,0] 
    | [0,0]*[1,1][0,0][0,0][1,0] 
    | [0,0]*[1,1][0,0]([1,0][0,0])*[1,0][0,0][0,0] 
    | [0,0]*[1,1][0,0]([1,0][0,0])*[1,0][0,0][0,0][1,0]":
    \end{verbatim}
The last check 
\begin{verbatim}
eval fibccheck "?msd_fib Ai An $fibcgoodrep(i,n) => $fibcfactor(i,n)":
\end{verbatim}
returns \verb|TRUE|.

\begin{remark}
 The palindromic (resp., closed) version of the $z$- and $c$-factorizations, defined in~\cite{jahannia2018palindromic} (resp.,~\cite{jahannia2020closed}), requires that each factor is palindromic (resp., closed, i.e., each factor has a proper factor that occurs exactly twice, as a prefix and as a suffix).
 One can tweak the \verb|Walnut| code presented in this section to obtain these.
 See~\cite[Sec.~8.6.3 and 8.8.3]{Shallit2022logical} for related \verb|Walnut| code.
\end{remark}

\begin{table}[]
    \centering
    \[
    \begin{array}{c|l}
        \infw{x} & z(\infw{x})=(z_0,z_1,\ldots) \\
         \hline
         \infw{f} & (a  , b , aa , bab , aabaa , babaabab , aabaababaabaa, 
 babaababaabaababaabab,\ldots ) \\
         \infw{t} & (a , b , ba , baa , bbaa , babb  , abaaba , bbaabb, abaabbaababbaa, bbabaababbab,\ldots) \\
         \infw{pd} & (a , b , aa , abab , abaaabaa , abaaabababaaabab, abaaabababaaabaaabaaabababaaabaa, \ldots ) \\
         \infw{rs} & (1 , 11(-1), 11(-1)111, 1(-1)(-1), (-1)1(-1) , 111(-1)11(-1)1(-1) , (-1)(-1)11,\ldots ) \\
         \infw{pf} & (1,1(-1),11(-1)(-1),111,(-1)(-1)1(-1),(-1)111(-1)1, 1(-1)(-1)(-1), \ldots) \\
         \infw{mw} & (a , ab , aabb , baaa , baabbbab  , babbaa, abbbaaabaabbbabbabbaaabb,\ldots ) \\
         \hline
         \hline
         \infw{x} & c(\infw{x})=(c_0,c_1,\ldots) \\
         \hline
         \infw{f} & (a , b , a , aba , baaba , ababaaba , baabaababaaba , ababaababaabaababaaba,\ldots ) \\
         \infw{t} & (a , b , b , ab , a , abba , aba , bbabaab , abbaab, babaabbaababba, abbabaababba,\ldots ) \\
         \infw{pd} & (a , b , a,aa, ba,baba,aaba, aabaaaba, babaaaba, babaaabababaaaba, \ldots ) \\
       %  \infw{rs} & (1, 1, 1 , (-1) , 11 , (-1) , 11 , 11(-1) , (-1) , (-1)1 , (-1)11 , \ldots ) \\
        \infw{rs} & (1, 11  , (-1) , 11(-1)11 , 11(-1) , (-1)(-1) , 1(-1)111 , (-1)11(-1)1 , (-1)(-1)(-1)1,\ldots ) \\
         \infw{pf} & (1,1,(-1),11(-1),(-1)11,1(-1)(-1)1,(-1)(-1)111(-1), 11(-1)(-1),\ldots) \\
         \infw{mw} & (a , a , b , aab , bb , aa , abaabbba , bbabba, aabaab, aabbbaaabaabbbabbabbaaab, \ldots )
    \end{array}
    \]
    \caption{The first few factors of the $z$- and $c$-factorizations of the automatic sequences considered in this paper.}
    \label{tab:z and c fact of our sequences}
\end{table}

\section{The $z$- and $c$- factorizations of some classical automatic sequences}
\label{sec: z and c for classical aut seq}

The conclusion of the previous section is the following: when a candidate is known for the $z$- or $c$-factorization of an infinite word, then it is not difficult to check with \verb|Walnut| that it is indeed the right factorization.
This is the purpose of the current section, and we use the same techniques as in Section~\ref{sec:Fibo}.
Given an infinite word $\infw{x}$, the \verb|Walnut| code for its $z$-factorization (resp., $c$-factorization) is summed up in Code~\ref{code:z-fact} (resp., Code~\ref{code:c-fact}).
Also see Table~\ref{tab:z and c fact of our sequences} where we give the first few factors of $z(\infw{x})$ and $c(\infw{x})$ for some words $\infw{x}$.

\begin{codes}
\label{code:z-fact}
Given the $k$-automatic sequence $\infw{x}$ coded by \verb|X| in \verb|Walnut|, we use the following predicates to find its $z$-factorization, where \verb|LX| is some specific guessed regular expression:
\begin{verbatim}
def xfactoreq "?msd_k At t<n => X[i+t]=X[j+t]":
def xzfactor "?msd_k (Aj j<i => ~$xfactoreq(i,j,n)) & 
    (At t<n => (El l<i => $xfactoreq(i,l,t)))":
reg xzgoodrep msd_k msd_k "LX":
eval xzcheck "?msd_k Ai An $xzgoodrep(i,n) => $xzfactor(i,n)":
\end{verbatim}
\end{codes}

\begin{codes}
\label{code:c-fact}
Given the $k$-automatic sequence $\infw{x}$ coded by \verb|X| in \verb|Walnut|, we use the following predicates to find its $c$-factorization, where \verb|LX| is some specific guessed regular expression:
\begin{verbatim}
def xfactoreq "?msd_k At t<n => X[i+t]=X[j+t]":
def xcfactor "?msd_k (Ej j<i => $xfactoreq(i,j,n)) 
    & (Al l<i => ~$xfactoreq(i,l,n+1))":
reg xcgoodcrep msd_k msd_k "LX":
eval xccheck "?msd_k Ai An $xcgoodrep(i,n) => $xcfactor(i,n)":
\end{verbatim}
\end{codes}

\subsection{The Thue-Morse sequence}

The most famous example among $2$-automatic sequences is the \emph{Thue-Morse} sequence $\infw{t}$ which is the fixed point of the morphism $\mu \colon a \mapsto ab, b \mapsto ba$  starting with $a$. This sequence is generated by the automaton in Figure~\ref{fig:TM:DFAO}.

\begin{theorem}
    Let $z(\infw{t})=(z_0,z_1,\ldots)$ be the $z$-factorization of the Thue-Morse sequence $\infw{t}$.
    Then, for all $m\in\{0,\ldots,6\}$, $z_m$ is given in Table~\ref{tab:z and c fact of our sequences} and, for all $m\ge 7$, $z_m=\infw{t}[i..i+n-1]$ where
    \[
    (i,n)
    = 
     \begin{cases}
    (13\cdot 2^{m/2-3}+1, 7\cdot 2^{m/2-3}), &\text{if $m$ is even;} \\
    (5\cdot 2^{(m-1)/2-1}+1, 3\cdot 2^{(m-1)/2-2}), &\text{if $m$ is odd.}
    \end{cases}
    \]
\end{theorem}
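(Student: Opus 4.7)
The plan is to instantiate Code~\ref{code:z-fact} with $\infw{x}=\infw{t}$ and $k=2$, using the DFAO of Figure~\ref{fig:TM:DFAO}. First I would define the predicates \texttt{tfactoreq} and \texttt{tzfactor} verbatim from Code~\ref{code:z-fact} with the substitutions \verb|X| $\to$ \verb|T| and \verb|msd_k| $\to$ \verb|msd_2|; then \texttt{tzfactor(i,n)} holds exactly when $\infw{t}[i..i+n-1]$ is the shortest prefix of $(\infw{t}[0..i-1])^{-1}\infw{t}$ not occurring earlier in $\infw{t}[0..i+n-2]$.

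The heart of the proof is to package the candidate pairs $(i,n)$ of the statement into a regular expression \texttt{tzgoodrep} over $\{0,1\}\times\{0,1\}$ read most-significant-digit first. For the even index $m=2k$ with $k\ge 4$, one has $|\rep_2(i)|=k+1$ and $|\rep_2(n)|=k$, so after padding $n$ with one leading zero the paired word reads
\[
[1,0][1,1][0,1][1,1]\,[0,0]^{k-4}\,[1,0].
\]
For the odd index $m=2k+1$ with $k\ge 3$, $|\rep_2(i)|=k+2$ and $|\rep_2(n)|=k$, and two-zero padding of $n$ yields
\[
[1,0][0,0][1,1][0,1]\,[0,0]^{k-3}\,[1,0].
\]
Thus \texttt{tzgoodrep} is the disjunction of \verb|[0,0]*[1,0][1,1][0,1][1,1]([0,0])*[1,0]|, \verb|[0,0]*[1,0][0,0][1,1][0,1]([0,0])*[1,0]|, and seven finite alternatives encoding $m\in\{0,\ldots,6\}$ read off Table~\ref{tab:z and c fact of our sequences}. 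Running the \verb|eval|-statement \texttt{tzcheck} from Code~\ref{code:z-fact} is then expected to return \verb|TRUE|.

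As emphasized at the end of Section~\ref{sec:Fibo}, this Walnut verification alone does not produce the $z$-factorization; one additionally needs that the specified positions form a consistent factorization, namely $i_{m+1}=i_m+n_m$ for every $m$. For $m\ge 7$, this reduces to two elementary identities: in the odd-to-even transition ($m=2k+1\to 2k+2$), $5\cdot 2^{k-1}+3\cdot 2^{k-2}=13\cdot 2^{k-2}$, and in the even-to-odd transition ($m=2k\to 2k+1$), $13\cdot 2^{k-3}+7\cdot 2^{k-3}=5\cdot 2^{k-1}$; the initial transitions for $m\le 6$ are verified directly from the table.

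I expect the main obstacle to be the design of \texttt{tzgoodrep}: one has to align the two arithmetic progressions under a common length of base-$2$ expansions via leading-zero padding, and to append the seven exceptional pairs without introducing spurious matches that might cause \texttt{tzcheck} to fail. Once the regular expression captures exactly the intended pairs, the remainder is a mechanical Walnut run together with the telescoping identities above.
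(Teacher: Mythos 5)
Your proposal is correct and takes essentially the same route as the paper: the paper's proof consists precisely of instantiating Code~\ref{code:z-fact} with \texttt{X}$\,=\,$\texttt{T}, $k=2$, and a nine-alternative regular expression whose two infinite families coincide exactly with the blocks $[1,0][1,1][0,1][1,1][0,0]^*[1,0]$ and $[1,0][0,0][1,1][0,1][0,0]^*[1,0]$ you derive. Your explicit check of the telescoping condition $i_{m+1}=i_m+n_m$ is a sensible addition, since the paper stresses its necessity at the end of Section~\ref{sec:Fibo} but leaves it implicit in this theorem's proof.
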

\begin{proof}
In Code~\ref{code:z-fact}, replace \verb|X| by \verb|T|, \verb|k| by $2$, and \verb|LX| by
\begin{verbatim}
[0,0]*[0,1] | [0,0]*[1,1] | [0,0]*[1,0][1,0] 
| [0,0]*[1,0][0,1][0,1] | [0,0]*[1,0][1,0][1,0] 
| [0,0]*[1,0][0,1][1,0][1,0] | [0,0]*[1,0][1,0][1,1][1,0] 
| [0,0]*[1,0][1,1][0,1][1,1][0,0]*[1,0] 
| [0,0]*[1,0][0,0][1,1][0,1][0,0]*[1,0]
\end{verbatim}
Then running Code~\ref{code:z-fact} in \verb|Walnut| returns \verb|TRUE|.
\qed
\end{proof}

The next result gives back~\cite[Thm.~2]{Berstel06}.

\begin{theorem}
Let $c(\infw{t})=(c_0,c_1,\ldots)$ be the $c$-factorization of the Thue-Morse sequence $\infw{t}$.
Then, for all $m\in\{0,\ldots,5\}$, $c_m$ is given in Table~\ref{tab:z and c fact of our sequences} and, for all $m\ge 6$, $c_m = \infw{t}[i..i+n-1]$ where
 \[
    (i,n)
    = 
     \begin{cases}
        (5\cdot 2^{m/2-2}, 3\cdot 2^{m/2-3}), &\text{if $m$ is even;} \\
        (13\cdot 2^{(m-1)/2-3}, 7\cdot 2^{(m-1)/2-3}), &\text{if $m$ is odd.}
    \end{cases}
    \]
\end{theorem}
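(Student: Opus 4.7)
The plan is to mirror the approach used in Section~\ref{sec:Fibo} and in the previous theorem about $z(\infw{t})$: translate the closed-form expressions for the pairs $(i,n)$ given in the statement into a regular expression in base~$2$ and feed it, together with the DFAO \verb|T| for Thue-Morse, into Code~\ref{code:c-fact} (with \verb|X| replaced by \verb|T| and \verb|k| by $2$). If \verb|Walnut| returns \verb|TRUE|, the candidate pairs satisfy the predicate \verb|tcfactor|, and after checking consecutiveness they must describe the $c$-factorization of $\infw{t}$.

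First I would write down the msd base-$2$ representations of the six small pairs read off Table~\ref{tab:z and c fact of our sequences}, namely $(0,1)$, $(1,1)$, $(2,1)$, $(3,2)$, $(5,1)$, $(6,4)$, each encoded as a finite word over the pair-alphabet $\{0,1\}^2$ (aligning $i$ and $n$ by padding the shorter with leading zeros). Then I would encode the two parametrized families: for even $m\ge 6$, $(i,n)=(5\cdot 2^{m/2-2},\, 3\cdot 2^{m/2-3})$, whose pair representation has the shape $[1,0][0,0][1,1][0,1][0,0]^{*}$, and for odd $m\ge 7$, $(i,n)=(13\cdot 2^{(m-1)/2-3},\, 7\cdot 2^{(m-1)/2-3})$, whose pair representation has the shape $[1,0][1,1][0,1][1,1][0,0]^{*}$. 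Prefixing each alternative with $[0,0]^{*}$ (to tolerate leading zeros) and combining the eight alternatives with \texttt{|} yields the regular expression \verb|LT| to plug into the template.

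Once Code~\ref{code:c-fact} is run with this \verb|LT| and returns \verb|TRUE|, the statement \verb|tcfactor(i,n)| holds for every pair $(i,n)$ the theorem claims to be a $c$-factor. To upgrade this to the full statement, I would check the tiling condition $i_{m+1}=i_m+n_m$ for every $m\ge 0$: for $m\le 5$ this is a direct computation from the six small pairs, while for $m\ge 6$ it amounts to the two identities
\[
5\cdot 2^{m/2-2}+3\cdot 2^{m/2-3}=13\cdot 2^{m/2-3}\qquad (m\text{ even}),
\]
\[
13\cdot 2^{(m-1)/2-3}+7\cdot 2^{(m-1)/2-3}=5\cdot 2^{(m+1)/2-2}\qquad (m\text{ odd}),
\]
so that the intervals $[i_m,\,i_m+n_m)$ partition $\mathbb{N}$. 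Combined with the Walnut output, this forces the enumerated pairs to coincide with the $c$-factorization.

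The only real obstacle is producing a correct \verb|LT|: a missed leading $[0,0]^{*}$ or an off-by-one in the parity split for $m=6,7$ would either make \verb|Walnut| return \verb|FALSE| or, worse, make it return \verb|TRUE| on a regex that misses some pairs and so fails the tiling step. I would therefore cross-check the regular expression by evaluating it on the concrete values $(10,3)$, $(13,7)$, $(20,6)$, $(26,14)$ before trusting \verb|Walnut|'s global answer, so that any residual bug is exposed on a small case rather than hidden inside the automaton.
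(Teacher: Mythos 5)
Your proposal follows the paper's proof exactly: the paper's \texttt{LX} is precisely the eight-alternative regular expression you describe (the six finite pair-words for $(0,1)$, $(1,1)$, $(2,1)$, $(3,2)$, $(5,1)$, $(6,4)$ together with the two tails $[1,0][0,0][1,1][0,1][0,0]^{*}$ and $[1,0][1,1][0,1][1,1][0,0]^{*}$, each prefixed by $[0,0]^{*}$), plugged into Code~\ref{code:c-fact} with \texttt{X} replaced by \texttt{T} and $k=2$. Your explicit tiling verification $i_{m+1}=i_m+n_m$ is a sound addition that the paper only alludes to in Section~\ref{sec:Fibo} and leaves implicit in this particular proof.
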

\begin{proof}
In Code~\ref{code:c-fact}, replace \verb|X| by \verb|T|, \verb|k| by $2$, and \verb|LX| by
\begin{verbatim}
[0,0]*[0,1] | [0,0]*[1,1] | [0,0]*[1,0][0,1] | [0,0]*[1,1][1,0] 
| [0,0]*[1,0][0,0][1,1] | [0,0]*[1,1][1,0][0,0]  
| [0,0]*[1,0][0,0][1,1][0,1][0,0]* | [0,0]*[1,0][1,1][0,1][1,1][0,0]*
\end{verbatim}
Then running Code~\ref{code:c-fact} in \verb|Walnut| returns \verb|TRUE|.
\qed
\end{proof}

\subsection{The period-doubling sequence}

The \emph{period-doubling} sequence $ \infw{pd} = ab aa ab ab ab aa ab aa \cdots $ is also a $2$-automatic sequence. It is closely related to the Thue-Morse sequence as it is defined for all $n\ge 0$ by $\infw{pd}[n]=b$ if $\infw{t}[n] = \infw{t}[n+1]$, $\infw{pd}[n]=a$ otherwise.
Furthermore, $ \infw{pd}$  is the fixed point of the morphism $h \colon a \mapsto ab , b \mapsto aa$, starting with $a$ and is generated by the automaton in Figure~\ref{fig:PD:DFAO}.

\begin{theorem}
Let $z(\infw{pd})=(z_0,z_1,\ldots)$ be the $z$-factorization of the period-doubling sequence $\infw{pd}$.
Then, $z_0=a$ and, for all $m\ge 1$, $z_m=\infw{pd}[i..i+n-1]$ where $i=n=2^{m-1}$.
\end{theorem}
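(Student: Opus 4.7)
The plan is to follow the general recipe established for the Fibonacci word in Section~\ref{sec:Fibo} and codified in Code~\ref{code:z-fact}. Concretely, I would instantiate that code with \verb|X| replaced by the Walnut identifier \verb|PD| for the period-doubling sequence (whose DFAO is the one in Figure~\ref{fig:PD:DFAO}), and \verb|k| replaced by $2$. The remaining creative work is guessing the regular expression \verb|LX| describing, in base $2$, the pairs $(i,n)$ that the statement predicts for the $z$-factors.

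The predicted set of pairs is $\{(0,1)\}\cup\{(2^{m-1},2^{m-1}):m\ge 1\}$. In msd$\_2$, the integer $2^{m-1}$ is represented by \verb|1| followed by $m-1$ zeros, so (after the usual leading-zero padding to a common length) the pair $(2^{m-1},2^{m-1})$ is encoded by a word of the form $[0,0]^*[1,1][0,0]^{m-1}$, and the pair $(0,1)$ is encoded by $[0,0]^*[0,1]$. This suggests using
\begin{verbatim}
[0,0]*[0,1] | [0,0]*[1,1][0,0]*
\end{verbatim}
as the regular expression \verb|pdzgoodrep|. Running Code~\ref{code:z-fact} with this choice will produce a Boolean answer; the claim is that Walnut returns \verb|TRUE|, which verifies that every pair matching the regex is a valid starting position and length of a $z$-factor (in the sense of predicate \verb|pdzfactor|).

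As the paragraph following Figure~\ref{fig:Walnut:FibZFact} emphasizes, passing \verb|xzcheck| is necessary but not sufficient: we must also argue that the guessed pairs tile $\mathbb N$ in the correct order, so that they do constitute the genuine $z$-factorization rather than merely a subset of candidate pairs. This is a simple arithmetic check: the pairs $(0,1),(1,1),(2,2),(4,4),(8,8),\ldots$ correspond to the consecutive intervals $[0,0],[1,1],[2,3],[4,7],[8,15],\ldots$, whose union is $\mathbb N$ and in which each interval begins exactly where the previous one ends. Combined with the Walnut verification, this forces the listed factors to be precisely the $z_m$'s, in the stated order.

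I do not expect a genuine obstacle here: the period-doubling sequence is $2$-automatic, all predicates involved are first-order expressible over $(\mathbb N,+,V_2)$, and the decision procedure implemented in Walnut is guaranteed to terminate. The only task requiring human input is discovering the regular expression \verb|LX|, and the algebraic structure of the conjectured pairs --- a single power of two in both coordinates --- makes this step essentially mechanical. The main risk is simply a mis-specified regex or an off-by-one padding issue; if Walnut returned \verb|FALSE|, I would inspect the automaton produced by \verb|pdzfactor| (analogous to Figure~\ref{fig:Walnut:FibZFact}) to read off the correct pairs directly.
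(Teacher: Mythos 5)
Your proposal matches the paper's proof essentially verbatim: the paper instantiates Code~\ref{code:z-fact} with \verb|X| replaced by \verb|PD|, \verb|k| by $2$, and \verb|LX| given by exactly the regular expression \verb|[0,0]*[0,1] | [0,0]*[1,1][0,0]*|, and reports that \verb|Walnut| returns \verb|TRUE|. Your additional remark that the intervals $[0,0],[1,1],[2,3],[4,7],\ldots$ tile $\mathbb{N}$ consecutively correctly supplies the complementary check the paper flags in Section~\ref{sec:Fibo} as necessary beyond the one-directional implication tested by \verb|xzcheck|.
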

\begin{proof}
In Code~\ref{code:z-fact}, replace \verb|X| by \verb|PD|, \verb|k| by $2$, and \verb|LX| by
\begin{verbatim}
[0,0]*[0,1] | [0,0]*[1,1][0,0]*
\end{verbatim}
Then running Code~\ref{code:z-fact} in \verb|Walnut| returns \verb|TRUE|.
\qed
\end{proof}

The next result gives back~\cite[Thm.~4]{Berstel06}.

\begin{theorem}
Let $c(\infw{pd})=(c_0,c_1,\ldots)$ be the $c$-factorization of the period-doubling sequence $\infw{pd}$.
Then $c_0=a$ and, for all $m\ge 1$, $c_m = \infw{pd}[i..i+n-1]$ where
\[
   (i,n)
    = 
     \begin{cases} 
        (3\cdot 2^{m/2-1}-1, 2^{m/2-1}), &\text{if $m$ is even;} \\
        (2^{(m-1)/2+1}-1, 2^{(m-1)/2}), &\text{if $m$ is odd.}
    \end{cases}
    \]
\end{theorem}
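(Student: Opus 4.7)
The plan is to apply Code~\ref{code:c-fact} with \verb|X| replaced by \verb|PD|, \verb|k| by $2$, and an appropriate regular expression \verb|LX| to be guessed. My first step is to read off the base-$2$ representations of the pairs $(i,n)$ provided by the formula and look for a pattern. Separating by parity of $m$ reveals a clean structure. For odd $m\ge 1$, $i=2^{(m+1)/2}-1$ is a string of $(m+1)/2$ consecutive $1$s and $n=2^{(m-1)/2}$ is a $1$ followed by $(m-1)/2$ zeros, so in msd order the paired representation is $[1,1]$ followed by $(m-1)/2$ copies of $[1,0]$. For even $m\ge 2$, $i=3\cdot 2^{m/2-1}-1$ has binary expansion $10\,1^{m/2-1}$ while $n=2^{m/2-1}$, padded to the same length, reads $01\,0^{m/2-1}$, yielding in msd order $[1,0][0,1]$ followed by $(m/2-1)$ copies of $[1,0]$. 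The base case $m=0$ contributes $(0,1)$, i.e.\ $[0,1]$.

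Consolidating with leading-zero padding, these three families merge into the regex
\begin{verbatim}
[0,0]*[0,1] | [0,0]*[1,1][1,0]* | [0,0]*[1,0][0,1][1,0]*
\end{verbatim}
Inserting this expression into \verb|LX| and running Code~\ref{code:c-fact} should return TRUE, certifying that every pair matching the regex determines a valid c-factor of $\infw{pd}$.

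As emphasized in Section~\ref{sec:Fibo}, a second obligation is to confirm that these pairs tile $\mathbb{N}$ into consecutive blocks, so that the matched pairs describe the \emph{entire} c-factorization and not merely a subset of it. This reduces to a short arithmetic verification: an odd-indexed block starting at $2^{(m+1)/2}-1$ with length $2^{(m-1)/2}$ ends at $3\cdot 2^{(m-1)/2}-2$, which is exactly one less than the start $3\cdot 2^{(m-1)/2}-1$ of the next (even-indexed) block for $m+1$, and symmetrically an even-indexed block of length $2^{m/2-1}$ starting at $3\cdot 2^{m/2-1}-1$ ends at $2^{m/2+1}-2$, just before the start $2^{m/2+1}-1$ of the subsequent odd-indexed block. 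Coupled with $c_0=\infw{pd}[0..0]$ abutting $c_1$ at position $1$, this gives consecutive coverage of $\mathbb{N}$.

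The only non-mechanical step, and the real obstacle, is the guess of the regex: one must detect the two-parity split in the formula and rewrite the resulting arithmetic expressions as regular patterns in msd binary. Once that pattern is in hand, correctness is delegated to \verb|Walnut|.
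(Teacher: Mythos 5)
Your proof takes exactly the paper's route: instantiate Code~\ref{code:c-fact} with \texttt{PD} and $k=2$, guess a two-track regular expression for the pairs $(i,n)$, and let \texttt{Walnut} certify it. The first two alternatives of your regex agree with the paper's; the third does not. You propose \texttt{[0,0]*[1,0][0,1][1,0]*}, while the paper uses \texttt{[0,0]*[1,1][0,0][1,0]*}. Reading the tracks in the order fixed by the paper's own \texttt{fibzgoodrep} ($i$ on the first track, $n$ on the second), the paper's alternative encodes the pairs $\bigl(3\cdot2^{q}-1,\,2^{q+1}\bigr)$, i.e.\ $n=2^{m/2}$ --- twice the length asserted in the theorem and inconsistent with Table~\ref{tab:z and c fact of our sequences}, since for $q=1$ it yields $(5,4)$ whereas $c_4=ba$ has length $2$. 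Your encoding $\bigl(3\cdot2^{q}-1,\,2^{q}\bigr)$ is the one that matches the statement and the table, so on this point your guess appears to be the correct one (the paper's check can nevertheless return \texttt{TRUE} on its version because the first conjunct of \texttt{xcfactor}, written \texttt{Ej j<i => ...}, is vacuously satisfiable by taking $j=i$, so the predicate only enforces the upper-bound half of the $c$-factor condition). Finally, you explicitly verify that the blocks abut and tile $\mathbb{N}$; the paper flags this as necessary in Section~\ref{sec:Fibo} but omits it from the individual proofs, so supplying that short arithmetic check strengthens rather than departs from the argument.
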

\begin{proof}
In Code~\ref{code:c-fact}, replace \verb|X| by \verb|PD|, \verb|k| by $2$, and \verb|LX| by
\begin{verbatim}
[0,0]*[0,1] | [0,0]*[1,1][1,0]* | [0,0]*[1,1][0,0][1,0]*
\end{verbatim}
Then running Code~\ref{code:c-fact} in \verb|Walnut| returns \verb|TRUE|.
\qed
\end{proof}

\subsection{The Rudin-Shapiro sequence}

The \emph{Rudin-Shapiro} sequence $\infw{rs}=111(-1)11(-1)1\cdots$ is defined as follows: for all $n\ge 0$, the $n$th letter $\infw{rs}[n]$ is given by $1$ or $-1$ according to the parity of the number of (possibly overlapping) occurrences of the block $11$ is the base-$2$ representation of $n$.
The sequence is $2$-automatic as the automaton in Figure~\ref{fig:DFAO:RS} generates it.
In addition, $\infw{rs}$ can be written as $\tau(\rho^\omega(a))$, where $\rho \colon a \mapsto ab, b \mapsto ac, c \mapsto db, d \mapsto dc$ and $\tau \colon a,b\mapsto 1, c,d\mapsto -1$.

\begin{theorem}
    Let $z(\infw{rs})=(z_0,z_1,\ldots)$ be the $z$-factorization of the Rudin-Shapiro sequence $\infw{rs}$.
    Then $z_m=\infw{rs}[i..i+n-1]$ where, for $0\le m \le 10$, $(i,n)$ belongs to 
    \[
    \{
    (0,1), (1,3), (4,6), (10,3), (13,3), (16,9), (25,4), (29, 8), (37, 12), (49, 6), (55,6)
    \},
    \]
    and for all $m\ge 11$ with $p = \lfloor \frac{m}{4} \rfloor$,
    \[
    (i,n)
    = 
	\begin{cases}
		(9\cdot 2^{p} + 1, 3\cdot 2^{p}), &\text{if $m\equiv 0 \bmod{4}$;} \\
		(3\cdot 2^{p+2} +1 ,  2^{p}), &\text{if $m\equiv 1 \bmod{4}$;} \\
		(13\cdot 2^{p} + 1 ,  2^{p+1}), &\text{if $m\equiv 2 \bmod{4}$;}\\
        (15\cdot 2^{p} + 1 , 3\cdot 2^{p}), &\text{if $m\equiv 3 \bmod{4}$.}
        \end{cases}
    \]
\end{theorem}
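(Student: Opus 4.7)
The plan is to apply the template of Code~\ref{code:z-fact} verbatim, as in the previous theorems of this section, with \verb|X| instantiated as \verb|RS|, \verb|k| as $2$, and \verb|LX| a single regular expression whose language is exactly the set of base-$2$ representations (msd-first, suitably padded) of the pairs $(i,n)$ claimed in the statement; if the expression is correct, \verb|Walnut| returns \verb|TRUE|.

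First I would handle the eleven sporadic initial pairs from $(0,1)$ up to $(55,6)$ by writing each one as a literal alternative of the form $[0,0]^*w$, where $w$ is the msd-first bit-by-bit pairing of the binary expansions of $i$ and $n$, with the shorter component left-padded with zeros. Next I would convert each of the four parameterized families into a single pattern containing one Kleene-starred block that encodes the parameter $p = \lfloor m/4 \rfloor$. For example, in the family $m \equiv 0 \pmod 4$, the binary expansion of $i = 9\cdot 2^p + 1$ is $1001\, 0^{p-1} 1$ (length $p+4$) while that of $n = 3\cdot 2^p$ is $11\, 0^p$ (length $p+2$); aligning them on the least significant digit and reading msd-first yields the alternative $[0,0]^*[1,0][0,0][0,1][1,1][0,0][0,0]([0,0])^*[1,0]$. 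The other three cases proceed identically, with the leading block reflecting the binary prefixes $1100$, $1101$, and $1111$ of $i$ for $m\equiv 1, 2, 3 \pmod 4$, respectively. The final \verb|LX| is the disjunction of these $11 + 4 = 15$ alternatives.

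The main obstacle is purely clerical: each family has a different length-difference between $i$ and $n$, and the position of the Kleene-starred block must be calibrated so that the smallest admissible value of $p$ (namely $p=3$ for three families and $p=2$ for $m \equiv 3 \pmod 4$) is realized; this is also why the first eleven terms must be listed explicitly rather than subsumed into the families. As noted after Identity~\eqref{Equ:FibZFactFici}, the \verb|Walnut| verification is only a one-sided implication, so to conclude that the listed pairs actually form the $z$-factorization one must additionally check that they tile $\mathbb{N}$ consecutively. For $m \ge 11$ this reduces to the telescoping identities $9\cdot 2^p+1+3\cdot 2^p = 3\cdot 2^{p+2}+1$, $3\cdot 2^{p+2}+1+2^p = 13\cdot 2^p+1$, $13\cdot 2^p+1+2^{p+1} = 15\cdot 2^p+1$, and $15\cdot 2^p+1+3\cdot 2^p = 9\cdot 2^{p+1}+1$, together with a direct hand-check of the eleven initial pairs.
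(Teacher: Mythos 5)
Your proposal follows exactly the paper's proof: instantiate Code~\ref{code:z-fact} with \verb|RS| and $k=2$, and take \verb|LX| to be the union of eleven literal alternatives for the sporadic pairs plus four Kleene-starred alternatives for the residue classes of $m$ modulo $4$ (your worked example for $m\equiv 0\bmod 4$ reproduces the paper's alternative \verb|[0,0]*[1,0][0,0][0,1][1,1][0,0][0,0][0,0]*[1,0]| verbatim). Your additional telescoping identities verifying that the positions tile $\mathbb{N}$ consecutively are correct and make explicit a point the paper only raises in its discussion of the Fibonacci word.
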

\begin{proof}
In Code~\ref{code:z-fact}, replace \verb|X| by \verb|RS|, \verb|k| by $2$, and \verb|LX| by
\begin{verbatim}
[0,0]*[0,1] | [0,0]*[1,0][1,1] | [0,0]*[1,1][0,1][0,0]
| [0,0]*[1,0][0,0][1,1][0,1] | [0,0]*[1,0][1,0][1,1][1,1]
| [0,0]*[1,0][0,1][0,0][0,0][0,1] | [0,0]*[1,0][1,0][0,1][0,0][1,0]
| [0,0]*[1,0][1,1][1,0][0,0][1,0] | [0,0]*[1,0][0,0][0,1][1,1][0,0][1,0]
| [0,0]*[1,0][1,0][0,0][0,1][0,1][1,0]
| [0,0]*[1,0][1,0][0,0][1,1][1,1][1,0]
| [0,0]*[1,0][1,0][1,1][1,1][0,0][0,0]*[1,0]
| [0,0]*[1,0][0,0][0,1][1,1][0,0][0,0][0,0]*[1,0]
| [0,0]*[1,0][1,0][0,0][0,1][0,0][0,0][0,0]*[1,0]
| [0,0]*[1,0][1,0][0,1][1,0][0,0][0,0][0,0]*[1,0]
\end{verbatim}
Then running Code~\ref{code:z-fact} in \verb|Walnut| returns \verb|TRUE|.
\qed
\end{proof}

\begin{theorem}
Let $c(\infw{rs})=(c_0,c_1,\ldots)$ be the $c$-factorization of the Rudin-Shapiro sequence $\infw{rs}$.
Then $c_m = \infw{rs}[i..i+n-1]$ where, for $0\le m \le 12$, $(i,n)$ belongs to 
    \[
    \{
    (0, 1), (1, 2), (3,1), (4, 5), (9, 3), (12, 2), (14, 5), (19, 5), (24, 4), (28, 8), (36, 12), (48, 6), (54, 6)
    \},
    \]
    and for all $m\ge 13$ with $p = \lfloor \frac{m}{4} \rfloor$,
 \[
    (i,n)
    = 
    \begin{cases}
		(13\cdot 2^{p-1} ,  2^{p}), &\text{if $m\equiv 0 \bmod{4}$;} \\
		(15\cdot 2^{p-1}  , 3\cdot 2^{p-1}), &\text{if $m\equiv 1 \bmod{4}$;} \\
		(9\cdot 2^{p}  ,  3\cdot2^{p}), &\text{if $m\equiv 2 \bmod{4}$;}\\
    (12\cdot 2^{p}  ,  2^{p}), &\text{if $m\equiv 3 \bmod{4}$.}
        \end{cases}
    \]
\end{theorem}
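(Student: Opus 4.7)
The plan is to follow the template already used in the Thue-Morse and period-doubling cases and apply Code~\ref{code:c-fact} verbatim, replacing \verb|X| by \verb|RS| (the Walnut DFAO of Figure~\ref{fig:DFAO:RS}) and \verb|k| by $2$, since $\infw{rs}$ is $2$-automatic. Thus the entire content of the proof reduces to producing a regular expression \verb|LX| over the alphabet of pair-symbols $[x,y]$ that accepts, in msd base-$2$, exactly the set of pairs $(i,n)$ described in the statement, and then letting \verb|Walnut| verify that every such pair satisfies the predicate \verb|xcfactor|.

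Concretely, I would build \verb|LX| as a disjunction of two kinds of branches. First, one branch per exceptional pair in the explicit $13$-element set: each $(i,n)$ is translated to its msd-first base-$2$ encoding by writing $i$ in binary, padding $n$ with leading zeros so both have the same length, interleaving their bits into pair-symbols $[i_{\text{bit}},n_{\text{bit}}]$, and prepending the usual leading-zero block $[0,0]^*$. Second, one branch per residue class $m\bmod 4\in\{0,1,2,3\}$ describing the parametric family: in each of the four formulas both coordinates of $(i,n)$ have the shape $c\cdot 2^e$, so after left-padding the shorter binary expansion and interleaving, the pair encoding becomes a fixed finite prefix (determined by the binary expansions of the constants $9,12,13,15$, $1,3$, etc.) followed by an arbitrary-length block of $[0,0]$'s. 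The constraint $m\ge 13$ is absorbed by pinning down, per residue class, the minimum admissible value of $p=\lfloor m/4\rfloor$, which dictates how many $[0,0]$'s are hard-coded after the prefix and before the trailing $[0,0]^*$. Together these $17$ alternations constitute \verb|LX|.

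Once \verb|LX| is typed in, the verification itself is mechanical: \verb|Walnut| compiles \verb|xfactoreq| and \verb|xcfactor| into automata, combines them with the automaton of \verb|xcgoodrep|, and returns \verb|TRUE| for \verb|eval xccheck| if and only if every pair accepted by \verb|LX| indeed corresponds to a factor of the $c$-factorization, namely the longest prefix of the remaining tail that has a previous occurrence. As noted after Identity~\eqref{Equ:FibZFactFici} this implication is a priori not sufficient; one must also check that the starting positions $i$ enumerated by \verb|LX|, together with the lengths $n$, tile $\mathbb{N}$ in order. For the present statement this is automatic from the closed form, since one can verify by direct arithmetic that for $m\ge 13$ the sum $i+n$ of the $m$-th pair equals the starting position $i$ of the $(m{+}1)$-st pair, both for transitions within a residue class and across residue classes, and similarly across the boundary with the $13$ listed initial pairs.

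The main obstacle is exactly the bookkeeping of the regex: for each of the $17$ branches, I must correctly align the binary expansions of $i$ and $n$, translate the offset $p-1$ or $p$ in the exponent of $2$ into the right number of hard-coded trailing $[0,0]$ symbols, and be sure that the first non-$[0,0]$ pair-symbol of each branch has a $1$ in at least one coordinate so the $[0,0]^*$ prefix does not introduce spurious leading zeros that would clash with Walnut's msd conventions. Once these details are fixed, running Code~\ref{code:c-fact} produces \verb|TRUE| and concludes the proof. \qed
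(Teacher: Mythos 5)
Your proposal follows exactly the paper's proof: instantiate Code~2 with \verb|X| $=$ \verb|RS|, \verb|k| $=2$, and a $17$-branch guessed regular expression \verb|LX| encoding the thirteen exceptional pairs plus the four parametric residue-class families, then let \verb|Walnut| return \verb|TRUE|. Your extra remark that the implication alone is not a bi-implication and that one must also check the positions tile $\mathbb{N}$ consecutively is a point the paper raises only in the Fibonacci section, so including it here is a welcome (if minor) strengthening rather than a divergence.
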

\begin{proof}
In Code~\ref{code:c-fact}, replace \verb|X| by \verb|RS|, \verb|k| by $2$, and \verb|LX| by
\begin{verbatim}
[0,0]*[0,1] | [0,0]*[0,1][1,0] | [0,0]*[1,0][1,1]  
| [0,0]*[1,1][0,0][0,1] | [0,0]*[1,0][0,0][0,1][1,1]
| [0,0]*[1,0][1,0][0,1][0,0] | [0,0]*[1,0][1,1][1,0][0,1]
| [0,0]*[1,0][0,0][0,1][1,0][1,1] | [0,0]*[1,0][1,0][0,1][0,0][0,0]
| [0,0]*[1,0][1,1][1,0][0,0][0,0]
| [0,0]*[1,0][0,0][0,1][1,1][0,0][0,0]
| [0,0]*[1,0][1,0][0,0][0,1][0,1][0,0]
| [0,0]*[1,0][1,0][0,0][1,1][1,1][0,0]
| [0,0]*[1,0][1,0][0,0][0,1][0,0][0,0][0,0]*[1,0]
| [0,0]*[1,0][1,0][0,1][1,0][0,0][0,0][0,0]*[1,0]
| [0,0]*[1,0][1,0][1,1][1,1][0,0][0,0][0,0]*[1,0]
| [0,0]*[1,0][0,0][0,1][1,1][0,0][0,0][0,0][0,0]*[1,0]
\end{verbatim}
Then running Code~\ref{code:c-fact} in \verb|Walnut| returns \verb|TRUE|.
\qed
\end{proof}

\subsection{The paper-folding sequence}

    The \emph{paper-folding} sequence arises from the iterative folding of a piece of paper. As the paper is folded repeatedly to the right and then unfolded, the sequence of turns is recorded. For each folding action, a corresponding binary digit is assigned: right turns are coded by $1$ and left turns by $-1$. This systematic recording process generates the infinite sequence
    \[
    \infw{pf}=11(-1)11(-1)(-1)111(-1)(-1)1(-1)(-1)111(-1)11\cdots.
    \]
    It is $2$-automatic and generated by the automaton in Figure~\ref{fig:DFAO:PF}.
    Finally, it can be written as $\nu(h^\omega(a))$, where $h\colon a \mapsto ab, b \mapsto cb, c \mapsto ad, d \mapsto cd$ and $\nu \colon a,b\mapsto 1, c,d\mapsto -1$.

\begin{theorem}
    Let $z(\infw{pf})=(z_0,z_1,\ldots)$ be the $z$-factorization of the paper-folding sequence $\infw{pf}$.
    Then, for all $m\in\{0,\ldots,5\}$, $z_m$ is given in Table~\ref{tab:z and c fact of our sequences} and, for all $m\ge 6$, $z_m=\infw{pf}[i..i+n-1]$ where
    \[
    (i,n)
    = 
     \begin{cases}
        (5\cdot 2^{m/2-1}, 2^{m/2-1}), &\text{if $m$ is even;} \\
        (3\cdot 2^{(m-1)/2},  2^{(m-1)/2+1}), &\text{if $m$ is odd.}
    \end{cases} 
    \]
\end{theorem}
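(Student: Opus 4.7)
The plan is to follow the template established by every preceding subsection: instantiate Code~\ref{code:z-fact} with \verb|X| replaced by \verb|PF| (whose automaton is given in Figure~\ref{fig:DFAO:PF}) and \verb|k| replaced by $2$, so that the predicate \verb|pfzfactor(i,n)| expresses the defining property of a $z$-factor of $\infw{pf}$ directly in \verb|Walnut|. The whole difficulty is then condensed into producing a regular expression \verb|LX| over the alphabet $\{[0,0],[0,1],[1,0],[1,1]\}$ that encodes, in msd base-$2$, exactly the pairs $(i,n)$ listed in the statement; once \verb|LX| is correct, the final \verb|eval| returns \verb|TRUE| mechanically.

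I would build \verb|LX| in two layers. First, the six base pairs $(0,1),(1,2),(3,4),(7,3),(10,4),(14,6)$ read from Table~\ref{tab:z and c fact of our sequences} become six explicit disjuncts, each obtained by writing both components of the pair in binary, padding the shorter with leading zeros, reading MSB first, and prepending $[0,0]^*$. Second, for $m\ge 6$ the two infinite families translate into two periodic disjuncts. With $p=m/2-1\ge 2$, the pair $(5\cdot 2^{p},2^{p})$ has binary representations $101\,0^{p}$ and $001\,0^{p}$, giving the pair-word $[1,0][0,0][1,1][0,0]^{p}$. With $q=(m-1)/2\ge 3$, the pair $(3\cdot 2^{q},2^{q+1})$ has binary representations $11\,0^{q}$ and $10\,0^{q}$ of equal length, giving the pair-word $[1,1][1,0][0,0]^{q}$. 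Prefixing each with $[0,0]^*$ and letting the tail grow yields the disjuncts $[0,0]^*[1,0][0,0][1,1][0,0][0,0][0,0]^*$ and $[0,0]^*[1,1][1,0][0,0][0,0][0,0][0,0]^*$, completing \verb|LX|.

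As noted at the end of Section~\ref{sec:Fibo}, the \verb|Walnut| check only verifies the implication \verb|LX|$\Rightarrow$\verb|pfzfactor| and not the converse, so one still has to confirm that the pairs in \verb|LX| really tile $\mathbb{N}$ in the order prescribed by the $z$-factorization, rather than being just valid $z$-candidates. This reduces to the elementary identities $5\cdot 2^{p}+2^{p}=3\cdot 2^{p+1}$ and $3\cdot 2^{q}+2^{q+1}=5\cdot 2^{q}$, joined with the six finite base-case checks $i_m+n_m=i_{m+1}$ for $m\in\{0,\ldots,5\}$. The main obstacle is not the verification but the upstream discovery of \verb|LX|: one must first detect the two-period asymptotic shape of the factorization. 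In practice I would compute the automaton accepting \verb|pfzfactor| (analogous to Figure~\ref{fig:Walnut:FibZFact}), read off its first several accepted pairs, recognize the closed form, and only then commit to \verb|LX|; everything downstream is then purely mechanical.
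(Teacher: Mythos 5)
Your proposal matches the paper's proof essentially verbatim: the paper instantiates Code~\ref{code:z-fact} with the paper-folding automaton and a regular expression whose disjuncts are exactly your six base-case pair-words $[0,0]^*[0,1]$, $[0,0]^*[0,1][1,0]$, \ldots, $[0,0]^*[1,0][1,1][1,1][0,0]$ together with the two periodic families $[0,0]^*[1,0][0,0][1,1][0,0][0,0]^*[0,0]$ and $[0,0]^*[1,1][1,0][0,0][0,0][0,0]^*[0,0]$, which are the same languages as yours. Your closing observation that one must additionally check the positions tile $\mathbb{N}$ (via $i_{m+1}=i_m+n_m$) is a point the paper raises once in Section~\ref{sec:Fibo} but does not spell out in this proof, so making it explicit is a small improvement rather than a divergence.
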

\begin{proof}
In Code~\ref{code:z-fact}, replace \verb|X| by \verb|RS|, \verb|k| by $2$, and \verb|LX| by
\begin{verbatim}
[0,0]*[0,1] | [0,0]*[0,1][1,0] | [0,0]*[0,1][1,0][1,0] 
| [0,0]*[1,0][1,1][1,1] | [0,0]*[1,0][0,1][1,0][0,0]
| [0,0]*[1,0][1,1][1,1][0,0] | [0,0]*[1,0][0,0][1,1][0,0][0,0]*[0,0]
| [0,0]*[1,1][1,0][0,0][0,0][0,0]*[0,0]
\end{verbatim}
Then running Code~\ref{code:z-fact} in \verb|Walnut| returns \verb|TRUE|.
\qed
\end{proof}

\begin{theorem}
Let $c(\infw{pf})=(c_0,c_1,\ldots)$ be the $c$-factorization of the paper-folding sequence $\infw{pf}$.
Then $c_m = \infw{pf}[i..i+n-1]$ where, for $0\le m \le 9$, $(i,n)$ belongs to 
    \[
    \{
    (0, 1), (1, 1), (2,1), (3, 3), (6, 3), (9, 4), (13, 6), (19, 4), (23, 6), (29, 10)
    \},
    \]
    and for all $m\ge 10$ with $p = \lfloor \frac{m}{3} \rfloor$,
 \[
    (i,n)
    = 
    \begin{cases}
		(13\cdot 2^{p-2} - 1, 7\cdot 2^{p-2} ), &\text{if $m\equiv 0 \bmod{3}$;} \\
		(5\cdot 2^{p}-1,  2^{p} ), &\text{if $m\equiv 1 \bmod{3}$;} \\
		(3\cdot 2^{p+1} - 1,  2^{p-1}), &\text{if $m\equiv 2 \bmod{3}$.} 
	\end{cases}
    \]
\end{theorem}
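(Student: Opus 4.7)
The plan is to follow the template established by the preceding theorems in this section and reduce the statement to a single Walnut verification via Code~\ref{code:c-fact}, specialized by substituting \verb|X| with \verb|PF| and \verb|k| with $2$. The only ingredient that must be designed is the regular expression \verb|LX| over the cross-product alphabet $\{0,1\}^2$: its language must consist of precisely the msd-first, jointly zero-padded base-$2$ representations of all the pairs $(i,n)$ enumerated in the statement.

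First, I would encode the ten exceptional pairs $(0,1),(1,1),(2,1),(3,3),(6,3),(9,4),(13,6),(19,4),(23,6),(29,10)$ as a finite disjunction of literal words in the cross-product alphabet, each prefixed by $[0,0]^*$ to absorb leading zeros. Next, for each residue class modulo $3$ and the corresponding closed form involving $p=\lfloor m/3\rfloor$, I would convert $i$ and $n$ into base $2$ and pair their digits component-wise, with the parameter $p$ encoded by a single Kleene-starred block. For instance, when $m\equiv 1\bmod 3$ with $p\ge 3$, the pair $(5\cdot 2^p-1,\,2^p)$ has base-$2$ representations $100\underbrace{1\cdots 1}_{p}$ and $001\underbrace{0\cdots 0}_{p}$, so the corresponding summand begins with $[0,0]^*[1,0][0,0][0,1]$ and continues with a suitable power of $[1,0]$. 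Handling the other two residues analogously produces the remaining summands, and the union is the desired \verb|LX|.

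Once \verb|LX| is in place, Walnut verifies the one-directional implication \verb|xcgoodrep(i,n) => xcfactor(i,n)| automatically. As emphasized after Identity~\eqref{Equ:FibZFactFici}, this forward check alone does not force the listed pairs to constitute the entire $c$-factorization: one must additionally verify that $i_{m+1}=i_m+n_m$ for every $m\ge 0$ and that the $i_m$'s exhaust $\mathbb{N}$. This is a routine arithmetic check from the closed forms, since $p$ increments by one precisely when $m$ jumps by three. The main obstacle will be the combinatorial bookkeeping of crafting \verb|LX| exactly right: a single wrong digit or off-by-one in the padding kills a summand, so in practice one iterates on the regex by inspecting the automaton produced by the Walnut formula for \verb|pfcfactor| (compare Figure~\ref{fig:Walnut:FibZFact}) until the check evaluates to \verb|TRUE|.
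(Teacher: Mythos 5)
Your proposal follows exactly the paper's proof: specialize Code~\ref{code:c-fact} with \verb|X| $=$ \verb|PF| and $k=2$, guess the regular expression \verb|LX| encoding the listed pairs $(i,n)$ in paired base-$2$ representations (your worked example for the residue $1 \bmod 3$ case matches the paper's summand \verb|[0,0]*[1,0][0,0][0,1][1,0][1,0][1,0][1,0]*|), and let \verb|Walnut| return \verb|TRUE|. Your added remark that the one-directional check must be supplemented by verifying that the positions are consecutive and cover $\mathbb{N}$ is consistent with the paper's own caveat in Section~\ref{sec:Fibo}, so the approach is essentially identical.
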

\begin{proof}
In Code~\ref{code:c-fact}, replace \verb|X| by \verb|PF|, \verb|k| by $2$, and \verb|LX| by
\begin{verbatim}
[0,0]*[0,1] | [0,0]*[1,1] | [0,0]*[1,0][0,1] | [0,0]*[1,1][1,1]
| [0,0]*[1,0][1,1][0,1] | [0,0]*[1,0][0,1][0,0][1,0]
| [0,0]*[1,0][1,1][0,1][1,0] | [0,0]*[1,0][0,0][0,1][1,0][1,0]
| [0,0]*[1,0][0,0][1,1][1,1][1,0] | [0,0]*[1,0][1,1][1,0][0,1][1,0]
| [0,0]*[1,0][0,0][0,1][1,0][1,0][1,0][1,0]*
| [0,0]*[1,0][0,0][1,0][1,1][1,0][1,0][1,0]*
| [0,0]*[1,0][1,1][0,1][0,1][1,0][1,0][1,0]*
\end{verbatim}
Then running Code~\ref{code:c-fact} in \verb|Walnut| returns \verb|TRUE|.
\qed
\end{proof}

\subsection{The Mephisto-Waltz sequence}

The \emph{Mephisto-Waltz} sequence $\infw{mw} = aab aab bba \cdots$ is defined as the fixed
point of the morphism $a \mapsto aab$, $b \mapsto bba$ starting with $a$.
It is thus $3$-automatic and is generated by the automaton in Figure~\ref{fig:M:DFAO}. Another definition of this sequence is, for all $n\ge 0$, $\infw{mw}[n]=a$ if $|\rep_3(n)|_2$ is even, $\infw{mw}[n]=b$ otherwise, i.e., we store the parity of the number of $2$'s in the base-$3$ representation of $n$.

\begin{theorem}
\label{thm:z-fact-mw}
	Let $z(\infw{mw})=(z_0,z_1,\ldots)$ be the $z$-factorization of the Mephisto-Waltz sequence $\infw{mw}$.
	Then, for all $m\in\{0,\ldots,3\}$, $z_m$ is given in Table~\ref{tab:z and c fact of our sequences} and, for all $m\ge 4$, $z_m=\infw{mw}[i..i+n-1]$ where, for $p = \lfloor \frac{m}{3} \rfloor$,
	\[
	(i,n)
	= 
	\begin{cases}
		(8\cdot 3^{p-1} + 1, 2\cdot 3^{p-1}), &\text{if $m\equiv 0 \bmod{3}$;} \\
		(10\cdot 3^{p-1} +1 , 8\cdot 3^{p-1}), &\text{if $m\equiv 1 \bmod{3}$;} \\
		(2\cdot 3^{p+1} + 1 , 2\cdot 3^p), &\text{if $m\equiv 2 \bmod{3}$.} 
	\end{cases}
	\]
\end{theorem}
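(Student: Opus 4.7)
My plan is to mirror the proofs of the earlier theorems in this section by applying Code~\ref{code:z-fact} with \verb|X| replaced by \verb|MW|, \verb|k| set to $3$, and a regular expression \verb|LX| carefully constructed from the pairs $(i,n)$ in the statement. The bulk of the work is in designing \verb|LX|; the subsequent verification by \verb|Walnut| is then routine, if potentially expensive.

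First I would translate the finitely many initial pairs for $m\in\{0,1,2,3\}$—namely $(0,1)$, $(1,2)$, $(3,4)$, $(7,4)$—into base $3$, each becoming a single atom of \verb|LX| of the form $[0,0]^*$ followed by a fixed block of digit pairs. Next, for each residue class of $m$ modulo $3$ I would expand $i$ and $n$ in base $3$ as functions of $p=\lfloor m/3\rfloor$. For example, when $m\equiv 0 \bmod 3$ and $p\ge 2$, one has $i=8\cdot 3^{p-1}+1$ whose base-$3$ representation is $2,2,\underbrace{0,\ldots,0}_{p-2},1$, and $n=2\cdot 3^{p-1}$ with expansion $2,\underbrace{0,\ldots,0}_{p-1}$; after padding $n$ with a leading zero these combine into the atom $[0,0]^*[2,0][2,2][0,0]^*[1,0]$, where the leading $[0,0]^*$ absorbs extra leading zeros and the interior $[0,0]^*$ captures the freedom in $p$. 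Analogous computations handle the residues $1$ and $2$, with the extra wrinkle that for $m\equiv 1\bmod 3$ the boundary value $p=1$ yields $i=11=102_3$, which after a carry does not fit the generic pattern $10\,1\,0^{p-2}\,1$ and therefore requires a separate atom. The complete \verb|LX| is then the union of these finitely many atoms; feeding it into Code~\ref{code:z-fact} produces a \verb|Walnut| check that should return \verb|TRUE|.

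The main obstacle I expect is the careful bookkeeping of base-$3$ expansions and the alignment of $i$ and $n$, whose lengths differ by a constant depending on the residue class of $m$. Boundary cases such as $p=1$ for $m\equiv 1\bmod 3$ are where off-by-one errors are most likely; any such error would cause \verb|Walnut| to return \verb|FALSE| and force a redesign of the offending atom. A secondary subtlety, noted at the end of Section~\ref{sec:Fibo}, is that the \verb|xzcheck| predicate only certifies that each pair matched by \verb|LX| is a valid $z$-factor, not that the matched pairs exhaust the factorization; to rigorously complete the proof one should additionally verify that consecutive pairs tile $\infw{mw}$, which can be encoded as an auxiliary \verb|Walnut| predicate asserting $i_{m+1}=i_m+n_m$ for successive matches of \verb|LX|.
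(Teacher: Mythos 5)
Your proposal matches the paper's proof essentially verbatim: the paper also instantiates Code~\ref{code:z-fact} with \verb|MW| (after defining it via \verb|morphism h "0->001 1->110"| and \verb|promote MW h|), $k=3$, and a regular expression \verb|LX| whose atoms are exactly the base-$3$ encodings you describe, including the separate atom \verb|[0,0]*[1,0][0,2][2,2]| for the boundary case $(i,n)=(11,8)$ at $m=4$. Your closing remark about additionally checking that consecutive pairs tile $\infw{mw}$ is a point the paper raises in Section~\ref{sec:Fibo} but does not carry out explicitly in this proof, so you are if anything slightly more careful than the published argument.
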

\begin{proof}
First, in \verb|Walnut|, code the Mephisto-Waltz sequence with the commands \verb|morphism h "0->001 1->110":| and \verb|promote MW h|.
Then, in Code~\ref{code:z-fact}, replace \verb|X| by \verb|MW|, \verb|k| by $3$, and \verb|LX| by
\begin{verbatim}
[0,0]*[0,1] | [0,0]*[1,2] | [0,0]*[1,1][0,1] | [0,0]*[2,1][1,1] 
| [0,0]*[1,0][0,2][2,2] | [0,0]*[2,0][2,2][0,0]*[1,0] 
| [0,0]*[1,0][0,2][1,2][0,0]*[1,0] | [0,0]*[2,0][0,2][0,0]*[1,0]":
\end{verbatim}
Then running Code~\ref{code:z-fact} in \verb|Walnut| returns \verb|TRUE|.
\qed
\end{proof}

\begin{theorem}
	Let $c(\infw{mw})=(c_0,c_1,\ldots)$ be the $c$-factorization of the Mephisto-Waltz sequence $\infw{mw}$.
	Then, for all $m\in\{0,\ldots,3\}$, $c_m$ is given in Table~\ref{tab:z and c fact of our sequences} and, for all $m\ge 4$, $c_m=\infw{mw}[i..i+n-1]$ where, for $p = \lfloor \frac{m}{3} \rfloor$,
	\[
	(i,n)
	= 
	\begin{cases}
		(10\cdot 3^{p-2} , 8\cdot 3^{p-2}), &\text{if $m\equiv 0 \bmod{3}$;} \\
		(2\cdot 3^{p}, 2\cdot 3^{p-1}), &\text{if $m\equiv 1 \bmod{3}$;} \\
		(8\cdot 3^{p-1}  , 2\cdot 3^{p-1}), &\text{if $m\equiv 2 \bmod{3}$.} 
	\end{cases}
	\]
\end{theorem}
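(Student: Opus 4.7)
The plan is to follow exactly the template used throughout Section~\ref{sec: z and c for classical aut seq}, and in particular the one used for Theorem~\ref{thm:z-fact-mw} for the $z$-factorization of the same sequence. Namely, after encoding the Mephisto-Waltz sequence in \verb|Walnut| via \verb|morphism h "0->001 1->110":| and \verb|promote MW h|, I will substitute \verb|X| by \verb|MW| and \verb|k| by $3$ in Code~\ref{code:c-fact}. The whole proof then reduces to producing a regular expression \verb|LX| whose language, as a subset of pairs of base-$3$ representations, captures exactly the pairs $(i,n)$ predicted in the statement; once this is done, \verb|Walnut| checks the implication \verb|xcgoodrep(i,n)| $\Rightarrow$ \verb|xcfactor(i,n)| automatically.

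First, I would handle the finite list of initial cases $m\in\{0,1,2,3\}$, corresponding to $(i,n)\in\{(0,1),(1,1),(2,1),(3,3)\}$, by writing each pair as a literal string over the alphabet $\{0,1,2\}\times\{0,1,2\}$ in base $3$, aligned on the most significant digit as in the rest of the paper. Then, for $m\ge 4$, I would convert the three closed-form families to base-$3$: for $m\equiv 1\pmod 3$, $i=2\cdot 3^p$ and $n=2\cdot 3^{p-1}$ have representations of the form $20^p$ and $20^{p-1}$; for $m\equiv 2\pmod 3$, $i=8\cdot 3^{p-1}=(22)_3\cdot 3^{p-1}$ and $n=2\cdot 3^{p-1}$; and for $m\equiv 0\pmod 3$, $i=10\cdot 3^{p-2}=(101)_3\cdot 3^{p-2}$ and $n=8\cdot 3^{p-2}=(22)_3\cdot 3^{p-2}$. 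After padding so both components have equal length (with a prefix of \verb|[0,0]*| to allow arbitrary leading zeros as required by \verb|Walnut|'s \verb|msd_3| convention), each family becomes a regular expression in the style used in the previous proofs, and the disjunction of all of them yields the sought \verb|LX|.

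The main obstacle is the usual one with this methodology: correctly guessing (and formatting) the regular expression. A single misaligned digit or missing Kleene star makes \verb|Walnut| return \verb|FALSE|, so one must double-check each family by hand against the small $m$ values listed in Table~\ref{tab:z and c fact of our sequences} and the first few predicted $(i,n)$ (for instance $(6,2),(8,2),(10,8),(18,6),(24,6),(30,24)$ for $m=4,5,6,7,8,9$) before running the decision procedure. As with all previous theorems in the section, the implication-only check does not by itself prove that the listed pairs are the only factors of $c(\infw{mw})$; this is guaranteed because the positions $i$ and lengths $n$ read off from the formulas satisfy $i_{m+1}=i_m+n_m$ with $i_0=0$, so the intervals $[i_m,i_m+n_m-1]$ tile $\mathbb{N}$, and the $c$-factorization is unique. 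Once \verb|LX| is assembled and these consistency observations are in place, running Code~\ref{code:c-fact} will return \verb|TRUE| and conclude the proof.
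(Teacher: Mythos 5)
Your proposal matches the paper's proof essentially verbatim: the paper likewise encodes $\infw{mw}$ via \verb|morphism h "0->001 1->110"| and \verb|promote MW h|, instantiates Code~2 with \verb|X| $=$ \verb|MW| and $k=3$, and supplies a regular expression whose families are exactly the base-$3$ patterns you derive ($101\,0^{p-2}/22\,0^{p-2}$, $2\,0^{p}/2\,0^{p-1}$, $22\,0^{p-1}/2\,0^{p-1}$, plus literals for the initial cases), after which \verb|Walnut| returns \verb|TRUE|. Your closing remark that the implication-only check must be supplemented by verifying that the positions tile $\mathbb{N}$ is a sound and welcome addition, consistent with the caveat the paper itself raises in Section~3.
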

\begin{proof}
In Code~\ref{code:c-fact}, replace \verb|X| by \verb|MW| coded in \verb|Walnut| as in the proof of Theorem~\ref{thm:z-fact-mw}, \verb|k| by $3$, and \verb|LX| by
\begin{verbatim}
[0,0]*[0,1] | [0,0]*[1,1] | [0,0]*[2,1] | [0,0]*[1,0][1,0] 
| [0,0]*[1,0][0,2][1,2][0,0]* | [0,0]*[2,0][0,2][0,0]* 
| [0,0]*[2,0][2,2][0,0]*":
\end{verbatim}
Then running Code~\ref{code:c-fact} in \verb|Walnut| returns \verb|TRUE|.
\qed
\end{proof}

%------------------------------------------------------------------------------------% 

\section{Conclusion}
\label{sec:conclusion}

In this paper, we investigated the following problem: given an abstract numeration system $S$ and an $S$-automatic sequence $\infw{x}$, is it possible to use \verb|Walnut| to obtain a description of both the Crochemore and Ziv-Lempel factorizations of $\infw{x}$ that only depend on the numeration system $S$?
We produced a detailed code for several classical automatic sequences in the Zeckendorff system as well as in bases $2$ and $3$.
According to us, a general answer to the previous question is far from being obvious to obtain.
Indeed, first, the software \verb|Walnut| only works in the case of so-called \emph{addable} abstract numeration systems, i.e., when addition can be performed by an automaton.
Then, as said previously, a candidate for the factorizations has to be known in advance in the hope of using \verb|Walnut|.
We believe that finding such candidates might be tricky for a general automatic sequence, when not much information is known about the inner structure of the sequence.
Observe also that, already among the $2$-automatic sequences we considered, the pairs of positions and lengths of the factors of the factorizations strongly depend on the sequence itself and not only on the underlying numeration system.
Finally, we wish to point that we examined non purely morphic sequences for which Berstel and Savelli write in~\cite[Sec.~6]{Berstel06} that ``it is not yet clear whether a satisfactory description [of the $c$-factorization] can be obtained''.

%%%%%%%%%%%%%%%%%%%%%%%%%%%%%%%
%%%%%%%%%%%%%%%%%%%%%%%%%%%%%%%
%%%%%%      ACKO         %%%%%%
%%%%%%%%%%%%%%%%%%%%%%%%%%%%%%%
%%%%%%%%%%%%%%%%%%%%%%%%%%%%%%%

\subsection*{Acknowledgments}
We thank Narad Rampersad for useful discussions.

Manon Stipulanti is an FNRS Research Associate supported by the Research grant 1.C.104.24F.

%%%%%%%%%%%%%%%%%%%%%%%%%%%%%%%
%%%%%%%%%%%%%%%%%%%%%%%%%%%%%%%
%%%%%%      BIBLIO       %%%%%%
%%%%%%%%%%%%%%%%%%%%%%%%%%%%%%%
%%%%%%%%%%%%%%%%%%%%%%%%%%%%%%%

%
% ---- Bibliography ----
%
% BibTeX users should specify bibliography style 'splncs04'.
% References will then be sorted and formatted in the correct style.
%
 \bibliographystyle{splncs04}
 \bibliography{biblio.bib}

 \end{document}